\newtheorem{theorem}{Theorem}
\newtheorem{definition}{Definition}
\newcommand{\be}{\begin{eqnarray}}
\newcommand{\ee}{\end{eqnarray}}
\newcommand{\nn}{\nonumber}
\newcommand{\bn}{\begin{enumerate}}
	\newcommand{\en}{\end{enumerate}}
\newcommand{\bl}{\begin{align}}
\newcommand{\el}{\end{align}}
\def\a{\alpha}
\def\d{\delta}
\def\eps{\epsilon}
\def\th{\theta}
\def\la{\lambda}
\def\r{\rho}
\def\p{\psi}
\def\La{\Lambda}
\def\P{\Psi}
\def\iff{\Longleftrightarrow}
\def\<{\langle}
\def\>{\rangle}
\def\jmath{{j}}
\begin{document}


\title{Generalized Coherence Concurrence and Path Distinguishability}

\author{Seungbeom Chin}
\email{sbthesy@skku.edu}
\affiliation{College of Information and Communication Engineering, Sungkyunkwan University, Suwon 16419, Korea}



\begin{abstract}
We propose a new family of coherence monotones, named the \emph{generalized coherence concurrence} (or coherence $k$-concurrence), which is an analogous concept to the generalized entanglement concurrence. The coherence $k$-concurrence of a state is nonzero if and only if the coherence number (a recently introduced discrete coherence monotone) of the state is not smaller than $k$, and a state can be converted to a state with nonzero entanglement $k$-concurrence via incoherent operations if and only if the state has nonzero coherence $k$-concurrence. We apply the coherence concurrence family to the problem of wave-particle duality in multi-path interference phenomena. We obtain a sharper equation for path distinguishability (which witness the duality) than the known value and show that the amount of each concurrence for the quanton state determines the number of slits which are identified unambiguously.  
\begin{description}
	\item[PACS numbers] 03.65.Ta, 03.67.Bg, 03.67.Mn
\end{description}
\end{abstract}

\pacs{03.65.Ta, 03.67.Bg, 03.67.Mn }

\maketitle


\section{INTRODUCTION}

The superposition principle is a key aspect of quantum physics, and representative quantum properties such as entanglement are understood best under the framework of superposition. We can say that a state is nonclassical (quantum) if and only if it is a superposition of some classical state \cite{killoran}. In the resource theory of  coherence, classical states are in some fixed orthonormal basis set $\{ |i\>\}_{i=1}^{d}$, and the coherence is a basis-dependent quantity.

 Since the quantitative formulation for measuring the amount of coherence is presented in \cite{baum}, the coherence resource theory flourished in diverse aspects, e.g., finding new measures and monotones of coherence \cite{yuan, winter, napoli, piani, tan, chitambar}, understanding the relation of coherence with other correlations\cite{strel, adesso, roga, ma, mmarv, marv, rana, rana2}, dynamics of coherence\cite{bromley, mani, puchala, singh, mondal}, and quantum thermodynamical approaches\cite{lost, cwik, naras} (see \cite{SP} for an up-to-date review on the coherence resource theory).
 
 One of the interesting topics in this area is to understand the connection between coherence and entanglement theory. In \cite{strel} it was shown that a nonzero coherent state can be used to create  entanglement. Killoran $et$ $al.$ \cite{killoran} generalized this process, and provided a framework for converting nonclassicality (including coherence) into entanglement. While discussing the conversion theorem, the authors presented an analogous concept to the Schmidt rank of entangled pure states, which is the $coherence$ $rank$ of pure states. This concept was generalized to mixed state case in \cite{Chin}, where the $coherence$ $number$ $r_C(\r)$ of a mixed state $\r$ was introduced along the Schmidt number of entangled mixed states. It is proved there that a state can be converted to an entangled state of nonzero $k$-concurrence (the $k$-th member of the generalized concurrence monotone family \cite{gour}) if and only if the coherence number of the state is not smaller than $k$. 

In this paper we introduce a set of new coherence monotones, which we call ``generalized coherence concurrence'' (or coherence $k$-concurrence following), which is organized to witness the coherence number and also has the formal and operational similarity with the generalized  entanglement concurrence. The coherence 2-concurrence from our construction is different from the coherence concurrence recently provided by \cite{Qi}, and we compare the two monotones and $l_1$-norm coherence monotone $C_{l_1}$ quantitatively.
It will be seen that the coherence $k$-concurrence of a mixed state $\r$, denoted as $C_c^{(k)}(\r)$, is nonzero if and only if $r_C(\r) \ge k$, and the state $\r$ can be converted to a state with nonzero entanglement $k$-concurrence via incoherent operations by adding an ancilla system $A$ set in a fixed incoherent state if and only if $C_c^{(k)}(\r)$ is nonzero.

The fact that the coherence concurrence family is well-ordered and has a hierarchy means that the monotones will be useful when we need to know for a quantum system how many bases are how intensively coherent to each other. As an example problem, we exploit the generalized coherence concurrence to understand wave-particle duality in the context of multi-slit interference experiments. It turns out that the concurrence family can capture the path distinguishability more accurately than $l_1$-norm monotone used in \cite{bera, paul2017}. 

This paper is organized as follows. 
In Section \ref{entkcon},  we briefly review the concept of the generalize entanglement concurrence and the coherence number. 
In Section \ref{cohkc}, we define the generalized coherence concurrence and show that it is a coherence monotone family. 
Then we derive the convertibility theorem of  the generalized concurrence between coherence and entanglement,
and compare the second member of the concurrence family, $C_c^{(2)}$, with other coherence monotones such as the coherence concurrence recently presented in \cite{Qi} and $l_1$-norm coherence $C_{l_1}$.
In Section \ref{path}, we calculate a new equation for path distinguiashbility using  $C_c^{(2)}$, and show that the generalized coherence concurrence provides further information on how many slits are distinguisable, which is by the hierarchy property of the coherence concurrence family.
In Section \ref{conclusion}, we summarize our results and present some remaining issues.

\section{REVIEW: THE GENERALIZED ENTALGEMENT CONCURRENCE AND COHERENCE NUMBER }\label{entkcon}
\indent

In this section we briefly review the concepts of the generalized entanglement concurrence and the coherence number.

\subsection*{The  generalized entaglement concurrence}

The  generalized entanglement concurrence monotone is a set of entanglement monotones for $(d\times d)$-systems \cite{gour}. This is the generalization of the $(2\times 2)$-system entanglement concurrence \cite{hill, wooters}.
With a $(d\times d)$-dimensional bipartite pure state $|\p\> = \sum_i\sqrt{\la_i}|i\tilde{i}\>_{AB}$ ($\la_i$'s are the Schmidt coefficients of $|\p\>$), the \emph{$k$-concurrence} ($2\le k \le d$) of $|\p\>$ is defined as  
\begin{align}
\label{ckp}
E_c^{(k)}(|\p\>) &\equiv \Big[\frac{S_k(\la)}{S_k(1/d,1/d, \cdots , 1/d)}\Big]^\frac{1}{k}, \\
S_k(\la) &\equiv \sum_{i_1 < i_2 < \cdots< i_k}\la_{i_1}\la_{i_2}\cdots \la_{i_k}, 
\end{align}
where $S_k(1/d, \cdots, 1/d) =\frac{1}{d^k}\binom{d}{k}$, so $E_c^{(k)}(|\p\>)$ is normalized as $0 \le E_c^{(k)}(|\p\>) \le 1$. Only when $|\p\>$ is maximally entangled  $E_c^{(k)}(|\p\>)$ is equal to $1$.
The $k$-concurrence $E_c^{(k)}(\r)$ for a mixed state $\r$ is defined by convex roof extension:
\begin{align}
&E_c^{(k)}(\r) \equiv \min_{\{p_i,|\p_i\>\}} \sum_i p_i E_c^{(k)}(|\p_i\>) \nn \\
&\Big(\r =\sum_ip_i|\p_i\>\<\p_i|, \quad \sum_ip_i=1, \quad p_i\ge 0\Big).
\end{align}
The $k$-concurrence of $\r$ is nonzero only when the Schmidt number of $\r$ is not smaller than $k$. All the $k$-concurrences with $2 \le k \le d$ consist in the \emph{generalized entanglement concurrence}. This entanglement monotone family is worth investigating since it consists of continuous measures of all possible entanglement dimension, some of which might be useful resources for quantum computation \cite{sentis}.

The last member of the concurrence family is named $G$-concurrence $G_d$, i.e.,  $G_d=E_c^{(d)}$. It has some convenient properties such as multiplicativity which are from the geometric mean form of the concurrence. The  $G$-concurrence also provides a lower bound for the whole $k$-concurrence family:
\begin{align}
E_c^{(2)}(\r) \ge E_c^{(3)}(\r) \ge \cdots \ge E_c^{(d-1)}(\r) \ge E_c^{(d)}(\r) = G_d(\r).
\end{align}
This is useful to analyze some entanglement system such as remote entanglement distribution (RED) protocols \cite{gour}.

The pure state $k$-concurrence can be rewritten in terms of $|\p\>$ that is not Schmidt-decomposed, i.e.,
\begin{align}
|\p\> = \sum_{ij}\p_{ij}|ij\>_{AB},
\end{align} 
which is given by \cite{Chin}
\begin{align}
&E_c^{(k)}(|\p\>) \nn \\
& =d\Bigg[\frac{1}{\binom{d}{k}}\sum_{\substack{i_1< \cdots <i_k \\ j_1 < \cdots <j_k} } 
 \Bigg|\sum_{a_1,\cdots, a_k} \eps_{a_1\cdots a_k} \p_{i_1j_{a_1}} \p_{i_2j_{a_2}}\cdots\p_{i_kj_{a_k}} \Bigg|^2 \Bigg]^{\frac{1}{k}}
\end{align}

\subsection*{Coherence number}
\indent

The coherence rank of a pure state \cite{killoran} is defined as 
\begin{align}
r_C(|\p\>) \equiv \min \Bigg\{  r \Bigg| |\p\> =\sum_{j=1}^{r\le d}\p_j|c_j\> \Bigg\}, 
\end{align}
where the set $\{|c_i\> \}$ is the (classical) referential basis set that is relabeled, and $\forall j:\p_j \neq 0$. A pure state is  nonclassical when $r_C >  1$. There exists a unitary operation $\La$ on $|\p\>$  such that the Schmidt rank of $\La|\p\>$ equals the coherence rank of $|\p\>$.
And the $coherence$ $number$ \cite{Chin} is a generalized concept of the coherence rank, in a similar manner to the Schmidt number \cite{terhal}.

\begin{definition}\label{cohnum}
	The coherence number $r_C(\r)$ for a mixed state $\r$ is defined as 
	\begin{align}
	r_C(\r) \equiv \min_{\{(p_a,|\p_a\>)\}}\max_a\Big[ r_C(|\p_a\>)\Big].
	\end{align}
\end{definition}
For pure states the coherence number is equal to the coherence rank.
The logarithm of the coherence number $\log[\r_C(\r)]$ is a discrete coherence monotone satisfying the axioms (C1-3) listed in Section \ref{cohkc}.

It is proved that the coherence number is a simple criterion for a state $\r$ to be a source for nonzero entanglement $k$-concurrences, i.e., $\r$ can be converted to an entangled state with $E_c^{(k)}(\La[\r]) \neq 0$ ($\La$ is an operation between the given system of $\r$ and and an ancilla system) if and only if $r_C(\r) \ge k$ \cite{Chin}.

\section{GENERALIZED COHERENCE CONCURRENCE}\label{cohkc}
\indent

The coherence resource theory resembles the entanglement resource theory in many aspects, which is considered as evidence that the quantum entanglement is a sort of derivative of coherence. Streltsov $et$ $al.$ \cite{strel} showed that a coherent state can be a resource of a bipartite entangled state through incoherent operations by attaching an ancilla system. A similar process is possible for quantum discord \cite{ma}.

 In this section, we introduce a family of new coherence monotones that is designed to correspond to the  generalized entanglement concurrence. Before tackling the main task, we summarize the axioms that the coherence monotones should fulfill \cite{baum}:\\
 \\
 (C1) Nonnegativity: $C(\r) \ge 0$
 
 (a stronger condition: $C(\r) =0$ if and only if $\r$ is incoherent)\\
 (C2) Monotonicity: $C(\r)$ is non-increasing under the incoherent operations, i.e.,
$C(\La[\r]) \le C(\r)$ for any incoherent operation  $\La$, where $\La: \mathcal{B(H)} \mapsto \mathcal{B(H)}$ admits a set of Kraus operators $\{K_n \}$  such that $\sum_n K_n^\dagger K_n=\mathbb{I}$ and $K_n\d K_n^\dagger$ $\in \mathcal{I}$ for any $\d \in \mathcal{I}$ (the set of incoherent density operators).  \\
(C3) Strong monotonicity: $C$ is non-increasing under  selective incoherent operations, i.e., $\sum_np_nC(\r_n) \le C(\r)$ with $p_n=tr[K_n\r K_n^\dagger]$, $\r_n=K_n\r K_n^\dagger/p_n$ for incoherent Kraus operators $K_n$.\\
(C4) Convexity: $\sum_ip_iC(\r_i) \ge C\Big(\sum_ip_i\r_i\Big)$.
\\

The conditions (C1) and (C2) are the minimal requirements for a quantity  to be a coherent monotone, and a quantity that fulfills (C3) and (C4) naturally fulfills (C2). 

\subsection*{The generalized coherence concurrence as a coherence monotone family}

As mentioned in Section \ref{entkcon}, the entanglement $k$-concurrence of a mixed state $\r$ is nonzero if and only if the Schmidt number of $\r$ is not smaller than $k$. Since the coherence number (rank) is the corresponding quantity in coherence theory to Schmidt number (rank) in entanglement theory, we expect that if  there exists a coherence monotone that corresponds to the entanglement $k$-concurrence, the monotone would have a similar relation with coherence number to the relation of entanglement $k$-concurrence with Schmidt number. We name it \emph{coherence $k$-concurrence}, which consists in the \emph{generalized coherence concurrence} family). The definition of the coherence $k$-concurrence is as follows:
\begin{definition}
\label{cohkcon}	
	 The coherence $k$-concurrence $C_c^{(k)}$ for a pure state $|\p\>=\sum_i\p_i|i\>$ is defined as

\begin{align}
C_c^{(k)}(|\p\>) &=d\Bigg( \frac{1}{\binom{d}{k}} \sum_{i_1<i_2<\cdots <i_k}\Big| \p_{i_1}^2\p_{i_2}^2\cdots\p_{i_k}^2\Big|\Bigg)^\frac{1}{k}.,
\end{align}
and  $C_c^{(k)}(\r)$ for a mixed state $\r$ is defined by convex roof extension:
\begin{align}
	&C_c^{(k)}(\r) \equiv \min_{\{p_a,|\p_a\>\}} \sum_a p_aC_k(|\p_a\>) \nn \\
	&\Big(\r =\sum_a p_a|\p_a\>\<\p_a|, \quad \sum_a p_a=1, \quad p_a\ge 0\Big).
\end{align}
A family of generalized coherence concurrence consists of $C_c^{(k)}$ with $2\le k \le d$.
\end{definition}

The normalization factor is multiplied so that the generalized coherence  concurrence has a similar inequality order to the generalized entanglement concurrence, i.e.,
\begin{align}\label{order}
C_c^{(2)}(\r) \ge C_c^{(3)}(\r) \ge \cdots \ge C_c^{(d-1)}(\r) \ge C_c^{(d)}(\r),
\end{align}
which is straightforward by Maclaurin's inequality.

\begin{theorem}
$C_c^{(k)}$ is a coherence monotone that satisfies (C1) to (C4).
\end{theorem}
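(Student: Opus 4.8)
The plan is to verify the four axioms in an order that exploits the convex-roof structure. First note that (C4) is automatic: $C_c^{(k)}$ on mixed states is by definition the convex roof of a pure-state function, and the convex roof of any function is convex, so $\sum_i p_i C_c^{(k)}(\rho_i)\ge C_c^{(k)}(\sum_i p_i\rho_i)$. Second, (C1) is immediate on pure states, since $C_c^{(k)}(|\psi\rangle)$ is a nonnegative sum raised to a positive power; moreover $C_c^{(k)}(|\psi\rangle)=0$ exactly when the degree-$k$ elementary symmetric polynomial $S_k(|\psi_1|^2,\dots,|\psi_d|^2)$ vanishes, i.e.\ when fewer than $k$ amplitudes are nonzero, i.e.\ when $r_C(|\psi\rangle)<k$. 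Taking the convex roof, $C_c^{(k)}(\rho)\ge 0$ with $C_c^{(k)}(\rho)=0$ iff $\rho$ admits a decomposition whose pure members all have coherence rank below $k$, which by Definition \ref{cohnum} means $r_C(\rho)<k$; this is the coherence-number analogue of the Schmidt-number criterion for $E_c^{(k)}$ and is the natural replacement for the stronger form of (C1), which holds only for $k=2$. Because (C3) and (C4) together imply (C2), it remains only to prove strong monotonicity (C3).

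For (C3) I would use the standard reduction of strong monotonicity of a convex-roof measure to its pure-state version. Picking an optimal decomposition $\rho=\sum_a q_a|\psi_a\rangle\langle\psi_a|$ with $C_c^{(k)}(\rho)=\sum_a q_a C_c^{(k)}(|\psi_a\rangle)$, applying the incoherent Kraus operators $\{K_n\}$ to each branch, and invoking (C4) on each $\rho_n$, one reduces the mixed-state inequality $\sum_n p_n C_c^{(k)}(\rho_n)\le C_c^{(k)}(\rho)$ to the pure-state statement $\sum_n p_{n|a}\,C_c^{(k)}(|\psi_{a,n}\rangle)\le C_c^{(k)}(|\psi_a\rangle)$ for each $a$, with $|\psi_{a,n}\rangle=K_n|\psi_a\rangle/\sqrt{p_{n|a}}$. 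So the whole theorem rests on pure-state strong monotonicity.

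To prove the pure-state inequality I would first record the structure of an incoherent Kraus operator: $K_n\mathcal{I}K_n^\dagger\subseteq\mathcal{I}$ forces each column to have at most one nonzero entry, so $K_n=\sum_i c_{ni}|f_n(i)\rangle\langle i|$ for maps $f_n$, with $\sum_n K_n^\dagger K_n=\mathbb{I}$ giving $\sum_n|c_{ni}|^2=1$ for every $i$. Writing $w_i=|\psi_i|^2$ and using that $C_c^{(k)}(|\psi\rangle)=d\binom{d}{k}^{-1/k}[S_k(w)]^{1/k}$ is homogeneous of degree one in $w$, a short calculation turns $p_n C_c^{(k)}(|\psi_n\rangle)$ into $d\binom{d}{k}^{-1/k}[S_k(u^{(n)})]^{1/k}$ with $u^{(n)}_j=|(K_n\psi)_j|^2$, so the target becomes $\sum_n[S_k(u^{(n)})]^{1/k}\le[S_k(w)]^{1/k}$. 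The engine is the classical fact, in the same family as the Maclaurin inequality already used for \eqref{order}, that $S_k^{1/k}$ is concave and positively homogeneous of degree one on the nonnegative orthant, hence superadditive: $[S_k(\sum_n v^{(n)})]^{1/k}\ge\sum_n[S_k(v^{(n)})]^{1/k}$. When no $f_n$ merges two populated basis states, $u^{(n)}$ is a relabelling of the vector with entries $|c_{ni}|^2 w_i$, these vectors sum over $n$ to exactly $w$, and superadditivity closes the estimate immediately; the unitary case also yields invariance of $C_c^{(k)}$ under incoherent permutations and phases.

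The hard part is the case in which some $f_n$ is not injective, so that several amplitudes are coherently summed into one basis state. Here the naive termwise bound $S_k(u^{(n)})\le S_k(\{|c_{ni}|^2 w_i\})$ can fail, because constructive interference enlarges a merged weight, and the completeness relation only forces the off-diagonal interference contributions of $\sum_n K_n^\dagger K_n$ to cancel after the sum over $n$. I expect to settle this by decomposing a general incoherent operation into elementary ones, namely appending an incoherent ancilla, an incoherent unitary, and a final coarse-graining in the incoherent basis, and proving a separate lemma that the coarse-graining step does not increase $[S_k(|\cdot|^2)]^{1/k}$ once the accompanying Kraus operators are retained, the interference being controlled precisely by $\sum_n|c_{ni}|^2=1$. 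As an independent check one can use the pure-state identity $C_c^{(k)}(|\psi\rangle)=E_c^{(k)}(\sum_i\psi_i|ii\rangle_{AB})$, obtained by reading $|\psi_i|^2$ as Schmidt coefficients, which identifies $C_c^{(k)}$ with the entanglement $k$-concurrence of the associated maximally correlated state and lets the monotonicity be cross-validated against the established monotonicity of $E_c^{(k)}$.
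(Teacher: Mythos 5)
Your overall architecture coincides with the paper's: (C1) and (C4) are taken as immediate from the definition and the convex roof, (C2) is obtained from (C3) together with (C4), the mixed-state version of (C3) is reduced to the pure-state inequality \eqref{C3p} by the standard convex-roof argument (the paper cites Appendix A.1 of \cite{Qi} for exactly this step), and the pure-state inequality is attacked with the same two ingredients the paper uses in Appendix \ref{C3pure}: the one-nonzero-entry-per-column structure of incoherent Kraus operators with the completeness relation $\sum_n |c_{ni}|^2=1$, and the superadditivity \eqref{subadd} of $S_k^{1/k}$ from \cite{gour}. Even your side remark that the strong form of (C1) fails and is replaced by the coherence-number criterion is the paper's remark (Theorem \ref{cohconnum}), stated by you slightly more precisely.

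The one place you diverge is the case you call the hard part: Kraus operators $K_n=\sum_i c_{ni}|f_n(i)\rangle\langle i|$ with non-injective $f_n$, where amplitudes are coherently merged and $|(K_n\psi)_i|^2$ acquires interference cross terms; your worry is well-founded, since e.g. $K_{1,2}=(|1\rangle\langle 1|\pm|1\rangle\langle 2|)/\sqrt{2}$ are admissible incoherent Kraus operators and the per-$n$ termwise bound can indeed fail for them. You leave this case as a plan (decomposition into elementary operations plus a coarse-graining lemma), and you should be aware that the paper does not supply that lemma either: Appendix \ref{C3pure} asserts, citing \cite{winter}, that every incoherent Kraus operator has the form $K_n=\sum_i c_n^i|s_i\rangle\langle i|$ with $|s_i\rangle$ a \emph{re-ordering} of the reference basis, i.e.\ it takes $i\mapsto s_i$ to be injective. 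Under that assumption the square $(\sum_j K_n^{ij}\psi_j)^2$ expands with no cross terms, which is precisely what makes the first equality of the appendix valid, and the remaining chain of inequalities is exactly your ``no merging'' computation. In other words, what you completed \emph{is} the paper's entire proof, and the case you flagged as unresolved is silently excluded there by the assumed Kraus form, even though the operations defined in (C2)--(C3) (and in your own characterization) allow non-injective $f_n$. Finally, your proposed ``independent check'' via $C_c^{(k)}(|\psi\rangle)=E_c^{(k)}(\sum_i\psi_i|ii\rangle)$ (the paper's generalized-CNOT identity) cannot close this gap either, because an incoherent operation on the system does not obviously induce an LOCC operation on the associated maximally correlated state, so the established monotonicity of $E_c^{(k)}$ is not directly transferable.
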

\begin{proof}
 The fulfillment of (C1) and (C4) is clear by definition, the strong condition of (C1) does not hold though (see Theorem \ref{cohconnum} below). (C2) is satisfied if (C3) and (C4) are satisfied, so what remains to be proven is (C3).
 It is proved in  Appendix \ref{C3pure} that (C3) is fulfilled for pure states, i.e., 
\begin{align}
\label{C3p}
	\sum_n p_nC_c^{(k)}(|\p_n\>) \le C_c^{(k)}(|\p\>).
\end{align}
  Then $C_c^{(k)}$ also satisfies (C3) for the convex roof extension of the quantity to mixed states (see Appendix A.1 of \cite{Qi}).

\end{proof}  

\subsection*{The conversion of concurrence from coherence into entanglement}\label{cohent}
\indent

In this subsection, we examine the entanglement convertibility theorem of the generalized concurrence monotone and show that the generalized coherence concurrence is an entanglement-based monotone \cite{strel}. The referential entanglement monotone is the generalized entanglement concurrence as expected. 

To obtain the convertibility theorem, we first clarify the relation between the coherence $k$-concurrence and the coherence number, which is basically identical to the relation between the entanglement $k$-concurrence and the Schmidt number.

\begin{theorem} \label{cohconnum}
	For a state $\r$,
	the coherence $k$-concurrence $C_c^{(k)}(\r)$  is nonzero if and only if the coherence number $r_C(\r)$ is not smaller than $k$.	
\end{theorem}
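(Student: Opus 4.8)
The plan is to reduce the entire statement to an elementary fact about pure states and then propagate it through the two decomposition-optimizations that define $C_c^{(k)}(\r)$ and $r_C(\r)$.

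First I would prove the pure-state version: for $|\p\>=\sum_i\p_i|i\>$, the concurrence $C_c^{(k)}(|\p\>)$ is nonzero if and only if $r_C(|\p\>)\ge k$. This is immediate from Definition \ref{cohkcon}, because a single summand $|\p_{i_1}^2\p_{i_2}^2\cdots\p_{i_k}^2|$ is strictly positive exactly when all $k$ amplitudes $\p_{i_1},\dots,\p_{i_k}$ are nonzero. Hence the sum $\sum_{i_1<\cdots<i_k}|\p_{i_1}^2\cdots\p_{i_k}^2|$ is positive iff $|\p\>$ has at least $k$ nonvanishing components in the reference basis, which is precisely the condition $r_C(|\p\>)\ge k$.

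Next I would prove the two implications of the mixed-state claim, each by contraposition, exploiting that $r_C(\r)$ and $C_c^{(k)}(\r)$ both range over \emph{the same} set of pure-state ensembles of $\r$. To obtain $r_C(\r)\ge k\Rightarrow C_c^{(k)}(\r)\neq 0$, I assume $C_c^{(k)}(\r)=0$: in finite dimension the convex-roof minimum is attained, so some decomposition $\r=\sum_a p_a|\p_a\>\<\p_a|$ with $p_a>0$ gives $\sum_a p_a C_c^{(k)}(|\p_a\>)=0$, which forces $C_c^{(k)}(|\p_a\>)=0$ and hence $r_C(|\p_a\>)\le k-1$ for every $a$ by the pure-state step; since this ensemble is admissible in the coherence-number minimization, $r_C(\r)\le\max_a r_C(|\p_a\>)<k$. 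Conversely, to obtain $C_c^{(k)}(\r)\neq 0\Rightarrow r_C(\r)\ge k$, I assume $r_C(\r)\le k-1$: by Definition \ref{cohnum} there is a decomposition attaining $\max_a r_C(|\p_a\>)=r_C(\r)<k$, so every $|\p_a\>$ has coherence rank below $k$ and thus $C_c^{(k)}(|\p_a\>)=0$; plugging this particular ensemble into the convex roof yields $C_c^{(k)}(\r)\le\sum_a p_a C_c^{(k)}(|\p_a\>)=0$, whence $C_c^{(k)}(\r)=0$ by (C1).

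The substantive content is the pure-state equivalence, and it is genuinely easy; the part requiring care is the bookkeeping that lets an ensemble optimal for one monotone serve as a valid witness for the other, which works only because both quantities are optimized over the identical family of pure-state decompositions of $\r$. The one technical point I would need to pin down is that the convex-roof infimum defining $C_c^{(k)}(\r)$ is actually achieved, since otherwise a vanishing infimum need not produce a single ensemble with all pure-state concurrences equal to zero; I expect to settle this by the standard finite-dimensional compactness argument, restricting to ensembles of bounded cardinality via Carath\'eodory and using continuity of $C_c^{(k)}$ on pure states.
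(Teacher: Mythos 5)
Your proof is correct, and it follows the same basic strategy as the paper's: reduce everything to the pure-state equivalence (which is immediate from Definition \ref{cohkcon}, since a product $|\p_{i_1}^2\cdots\p_{i_k}^2|$ survives exactly when $k$ amplitudes are nonzero) and then pass ensembles between the convex-roof optimization and the min-max defining $r_C$. But your organization differs in a way that is a genuine improvement. The paper proves the implication $C_c^{(k)}(\r)\neq 0 \Rightarrow r_C(\r)\ge k$ directly: it takes the ensemble optimal for $C_c^{(k)}$, finds one member with $r_C(|\p_a\>)\ge k$, and concludes that ``$r_C(\r)\ge k$ by Definition \ref{cohnum}''---but that last step is a non sequitur as written, because $r_C(\r)$ is a \emph{minimum} over decompositions, and exhibiting one decomposition containing a high-rank element does not bound that minimum from below. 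Your contrapositive for this implication (assume $r_C(\r)<k$, take the rank-optimal ensemble, note that every member then has vanishing pure-state concurrence, and feed that ensemble into the convex roof to force $C_c^{(k)}(\r)=0$) is precisely the argument needed to close this gap. Your other direction coincides with the paper's, except that you explicitly flag and justify attainment of the convex-roof minimum (Carath\'eodory plus compactness and continuity), a point the paper uses silently when it asserts that $C_c^{(k)}(\r)=0$ yields a single ensemble with all pure-state concurrences equal to zero; note that no such care is needed for the minimum defining $r_C(\r)$ itself, since it ranges over the finite set $\{1,\dots,d\}$.
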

\begin{proof}
	$\Longrightarrow$: Supposing $\{p_a,|\p_a\> \}$ is the optimal decomposition of $\r$ for $C_c^{(k)}$, the condition  $C_c^{(k)}\neq 0$ means that there exists at least one decomposing pure state $|\p_a\>$ that satisfies $C_c^{(k)}(|\p_a\>)\neq 0$. So  $r_C(|\p_a\>) \ge k$ by Definition \ref{cohkcon}, which goes to $r_C(\r) \ge k$ by Definition \ref{cohnum}.
	\\
	$\Longleftarrow$: Suppose  $C_c^{(k)}(\r)=0$. Then there exists a decomposition $\{p_a,|\p_a\>\}$ that satisfies $\forall a:r_C(|\p_a\>)< k$. So we have $r_C(\r)<k$.  
\end{proof}

Now with Theorem \ref{cohconnum} we can obtain the conversion theorem for each coherence $k$-concurrence quite simply: 
\begin{theorem}
	A state $\r^s$ can be converted to a state of nonzero entanglement $k$-concurrence via an incoherent operation by appending an ancillar system $A$ which is set in a referential incoherent  state $|1\>\<1|^A$ if and only if the coherence $k$-concurrence is nonzero.
\end{theorem}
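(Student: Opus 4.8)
The plan is to prove both implications at once by using the coherence number $r_C(\rho^s)$ as a bridge, since each side of the stated equivalence has already been tied to $r_C$ separately. On one side, Theorem \ref{cohconnum} gives $C_c^{(k)}(\rho^s)\neq 0$ if and only if $r_C(\rho^s)\ge k$. On the other side, the criterion of \cite{Chin} recalled in Section \ref{entkcon} states that $\rho^s$ can be converted, by an incoherent operation acting on the system together with an appended ancilla, into a state of nonzero entanglement $k$-concurrence if and only if $r_C(\rho^s)\ge k$. Chaining these two equivalences immediately yields the claim; the only bookkeeping is to note that the protocol in the present statement (append $A$ in the fixed incoherent state $|1\>\<1|^A$ and apply an incoherent operation between $s$ and $A$) is a special case of the operation $\La$ appearing in the \cite{Chin} criterion.

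To make the ``if'' direction constructive, and in fact to upgrade it to an exact identity, I would exhibit the generalized CNOT isometry $\La_{\mathrm{CNOT}}:|i\>^s|1\>^A\mapsto |i\>^s|i\>^A$, which is manifestly incoherent. For a pure input $|\p\>^s=\sum_i\p_i|i\>$ the output $\sum_i\p_i|i\>^s|i\>^A$ is already Schmidt-decomposed with Schmidt coefficients $\la_i=|\p_i|^2$; substituting into \eq{ckp}, using $S_k(1/d,\cdots,1/d)=\binom{d}{k}/d^k$ and $|\p_{i_1}^2\cdots\p_{i_k}^2|=|\p_{i_1}|^2\cdots|\p_{i_k}|^2$, gives $E_c^{(k)}(\La_{\mathrm{CNOT}}[|\p\>^s|1\>^A])=C_c^{(k)}(|\p\>)$. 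Because $\La_{\mathrm{CNOT}}$ is invertible on the subspace $\mathrm{span}\{|ii\>\}$ carrying the image, every pure state in the support of $\La_{\mathrm{CNOT}}[\rho^s\otimes|1\>\<1|^A]$ pulls back to one of the form $|\chi\>^s|1\>^A$, so the two convex roofs are taken over matching decompositions and the identity lifts to mixed states, $E_c^{(k)}(\La_{\mathrm{CNOT}}[\rho^s\otimes|1\>\<1|^A])=C_c^{(k)}(\rho^s)$. In particular $C_c^{(k)}(\rho^s)\neq 0$ yields an output of nonzero entanglement $k$-concurrence.

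For the ``only if'' direction the explicit CNOT does not suffice, since convertibility permits an \emph{arbitrary} incoherent operation; I would therefore argue by contraposition through the coherence number. If $C_c^{(k)}(\rho^s)=0$ then $r_C(\rho^s)<k$ by Theorem \ref{cohconnum}, and appending the pure incoherent ancilla leaves the coherence number unchanged, so $r_C(\rho^s\otimes|1\>\<1|^A)<k$. Since $r_C$ is nonincreasing under incoherent operations and upper-bounds the Schmidt number of the resulting bipartite state, the output Schmidt number stays below $k$ after any incoherent operation, forcing the entanglement $k$-concurrence to vanish. I expect this to be the only substantive point, precisely because it must exclude \emph{every} incoherent operation rather than analyze a single one; but this monotonicity-and-Schmidt-bound statement is exactly the criterion of \cite{Chin}, so the argument reduces to invoking that result together with Theorem \ref{cohconnum}, with no new estimate required.
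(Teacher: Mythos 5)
Your proof is correct and follows essentially the same route as the paper: both chain Theorem \ref{cohconnum} ($C_c^{(k)}(\r)\neq 0 \iff r_C(\r)\ge k$) with the criterion of \cite{Chin} (convertibility $\iff r_C(\r)\ge k$) to get the equivalence immediately. Your additional constructive step via the generalized CNOT, giving the exact identity $E_c^{(k)}(\La_u[\r^s\otimes|1\>\<1|^A])=C_c^{(k)}(\r^s)$, is not needed for the theorem itself but matches the remark the paper makes right after its proof.
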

\begin{proof} This statement is derived using the coherence number, which links the generalized concurrences for coherence and entanglement theory. Theorem 3 of \cite{Chin} and Theorem \ref{cohkcon} gives
	\begin{align}
	\exists \La^{SA}: E_c^{(k)}(\La^{SA}[\r\otimes |1\>\<1|^A])  \neq 0  & \iff r_C(\r) \ge k  \nn \\
	& \iff C_c^{(k)}(\r^s)\neq 0.
	\end{align}
\end{proof}

Now we consider what happens when the incoherent operation on the bipartite system is a unitary operation that transforms the coherence rank of a pure state to the Schmidt rank of the bipartite pure state,
\begin{align}
\label{u}
U\equiv \sum_{i=1}^{d}\sum_{j=i}^{d}|i\>\<i|^S\otimes |i\oplus (j-1)\>\<j|^A,
\end{align}
where $\oplus$ represents an addition modulo $d$. This is often called the generalized CNOT operation. And 
$|\p\>^S=\sum_{i=1}^{d}\p_i|i\>$ goes to $|\p\>^{SA}=\sum_{i=1}^{d}\p_i|ii\>$ under $\La^{SA}_u$ ($\La^{SA}_u: \r \mapsto U[\r^S\otimes |1\>\<1|^A] U^\dagger $).

Then we have
\begin{align}
E_c^{(k)}(|\p\>^{SA})& = d\Bigg[ \frac{1}{\binom{d}{k}}\sum_{i_1 < \cdots i_k} | \p_{i_1} \cdots \p_{i_k}|^2 \Bigg]^{\frac{1}{k}} \nn \\
& = C_c^{(k)}(|\p\>^S).	
\end{align}
for a pure state $|\p\>$, and this equality holds for a mixed state $\r$ by the definition of convex roof extension:
\begin{align}
E_c^{(k)}(\La^{SA}_u[\r^s\otimes |1\>\<1|^A]) = C_c^{(k)}(\r^s).	
\end{align}
So we can say that the generalized coherence concurrence is a kind of entanglement-based coherence monotone (Eq. (12) of \cite{strel}).

\subsection*{Comparison of $C_c^{(2)}$ with the coherence concurrence $C_c$ and $l_1$-norm coherence $C_{l_1}$}\label{c2cccl1} 

In the entaglement theory, the 2-concurrence in the generalized concurrence family is equal to the entanglement concurrence presented by \cite{hill, wooters}. In the coherence theory, we can compare 2-concurrence with the coherence concurrence $C_c$ recently presented by \cite{Qi}, which is defined as
\begin{align}
C_c(|\p\>) =2\sum_{j<k} |\p_j\p_k| 
\end{align}
for a pure state $|\p\> = \sum_i\p_i|i\>$ and  $C_c(\r)$ for a mixed state is the convex roof extension of the pure state monotone.

We have
\begin{align}
 C_c^{(2)}(|\p\>)& =\frac{d}{\binom{d}{2}^\frac{1}{2}}(\sum_{j<k} |\p_j|^2|\p_k|^2)^\frac{1}{2},
\end{align}
and the following relation between $C_c^{(2)}$ and $C_c$ holds:
\begin{theorem}\label{c2rel}
\begin{align}
 \frac{1}{d-1} C_c(\r) \le C_c^{(2)}(\r) \le \sqrt{\frac{d}{2(d-1)}}C_c(\r).
\end{align}	
\end{theorem}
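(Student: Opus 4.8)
The plan is to establish the double inequality first for arbitrary pure states, where it reduces to two elementary scalar estimates, and then to lift it to mixed states through the convex-roof structure, taking care to match each bound with the correct optimizing decomposition. For a pure state $|\p\> = \sum_i \p_i|i\>$ I would set $a_i \equiv |\p_i| \ge 0$ (so $\sum_i a_i^2 = 1$) and introduce the two symmetric sums
\begin{align}
P \equiv \sum_{j<k} a_j a_k, \qquad Q \equiv \sum_{j<k} a_j^2 a_k^2,
\end{align}
so that $C_c(|\p\>) = 2P$ and, using $\binom{d}{2} = d(d-1)/2$, $C_c^{(2)}(|\p\>) = \frac{d}{\sqrt{\binom{d}{2}}}\sqrt{Q} = \sqrt{\tfrac{2d}{d-1}}\,\sqrt{Q}$. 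After clearing the normalization constants, the full pure-state statement collapses to the single chain of scalar inequalities $Q \le P^2 \le \binom{d}{2}\,Q$.

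For the upper bound I would note that $P^2 = \big(\sum_{j<k} a_j a_k\big)^2$ expands to $Q$ plus a sum of products $a_j a_k a_l a_m$ over distinct pairs, all of which are nonnegative because the $a_i$ are; hence $Q \le P^2$, which is exactly equivalent to $C_c^{(2)}(|\p\>) \le \sqrt{\tfrac{d}{2(d-1)}}\,C_c(|\p\>)$. For the lower bound I would apply the Cauchy--Schwarz inequality to the $\binom{d}{2}$ products $\{a_j a_k\}_{j<k}$ paired against the constant vector, giving $P^2 \le \binom{d}{2}\,Q$; dividing through and using $d/\binom{d}{2} = 2/(d-1)$ yields $C_c^{(2)}(|\p\>) \ge \tfrac{1}{d-1}\,C_c(|\p\>)$.

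To extend each inequality to a general mixed state $\r$ I would exploit the fact that the two bounds point in opposite directions and therefore call for opposite choices of decomposition. For the upper bound, let $\{q_b, |\phi_b\>\}$ be an optimal decomposition for $C_c(\r)$; the pure-state inequality gives $\sum_b q_b\, C_c^{(2)}(|\phi_b\>) \le \sqrt{\tfrac{d}{2(d-1)}}\sum_b q_b\, C_c(|\phi_b\>) = \sqrt{\tfrac{d}{2(d-1)}}\,C_c(\r)$, and since $\{q_b, |\phi_b\>\}$ is a valid (though generally non-optimal) decomposition of $\r$ for $C_c^{(2)}$, its average dominates the convex-roof minimum, $C_c^{(2)}(\r) \le \sum_b q_b\, C_c^{(2)}(|\phi_b\>)$. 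For the lower bound, let $\{p_a, |\p_a\>\}$ be optimal for $C_c^{(2)}(\r)$; then $C_c^{(2)}(\r) = \sum_a p_a\, C_c^{(2)}(|\p_a\>) \ge \tfrac{1}{d-1}\sum_a p_a\, C_c(|\p_a\>) \ge \tfrac{1}{d-1}\,C_c(\r)$, where the last step again uses that the average over a valid decomposition is at least the convex-roof minimum.

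The genuinely delicate point is this final lifting step rather than the pure-state estimates, which are elementary: one must use the $C_c$-optimal decomposition for the upper bound and the $C_c^{(2)}$-optimal decomposition for the lower bound, so that in each case the minimization built into the convex roof is spent on the correct side of the chain. Once this pairing of decomposition to bound-direction is arranged, both halves follow immediately from the pure-state inequalities, and no assumption on the optimal decompositions coinciding is needed.
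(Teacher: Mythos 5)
Your proof is correct and takes essentially the same route as the paper: reduce both bounds for pure states to the scalar chain $Q \le P^2 \le \binom{d}{2}Q$ via elementary inequalities, then lift to mixed states through the convex roof. The only cosmetic differences are that you invoke Cauchy--Schwarz where the paper cites Newton's inequality for elementary symmetric polynomials (equivalent for this purpose), and you spell out the pairing of each bound with the appropriate optimal decomposition, which the paper compresses into the single phrase ``by convex roof extension.''
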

\begin{proof}
 For pure state case, the right inequality is trivial and the left inequality comes from the Newton's inequality of elementary symmetric polynomials. By convex roof extension, we obtain Theorem \ref{c2rel}.
\end{proof}
This relation is interesting for some reasons. First, it supports the claim in \cite{rana, rana2} that  $C_{l_1}$ is analogous to entanglement negativity. Indeed, considering $C_c$ is equal to $C_{l_1}$ for pure states, we have
\begin{align}
 \frac{1}{d-1} C_{l_1}(|\p\>) \le C_c^{(2)}(|\p\>) \le \sqrt{\frac{d}{2(d-1)}}C_{l_1}(|\p\>).
\end{align}
There exists the same form of inequality between the 2-concurrence and the negativity in entanglement theory \cite{eltschka}. Second, combining Theorem \ref{c2rel} with Eq. \eqref{order}, $C_c(\r)$ imposes an uppper bound for the whole coherence $k$-concurrence family, i.e.,
\begin{align}\label{order}
\sqrt{\frac{d}{2(d-1)}}C_c(\r) &\ge  C_c^{(2)}(\r)\nn \\
         & \ge C_c^{(3)}(\r) \ge \cdots \ge C_c^{(d-1)}(\r) \ge C_c^{(d)}(\r).
\end{align}

As an additional discussion, $C_c$ is in general not smaller than $l_1$-norm coherence monotone, i.e.,
\begin{align}
C_c(\r) \ge  C_{l_1}(\r) \equiv 2\sum_{j<k}|\r_{jk}|
\end{align}
\cite{Qi}, but there exists a necessary and sufficient condition for $C_c$ of a mixed state to be equal to $C_{l_1}$.

\begin{theorem}\label{CcCl}
For $d \ge 3$, $C_c(\r)$ and $C_{l_1}(\r)$ coincide   if and only if the state satisfies 
\begin{align}
\label{restriction}
&\qquad \frac{\r_{ij}\r_{jk}\r_{ki} }{|\r_{ij}\r_{jk}\r_{ki} |} = 1 \nn \\
& (\textrm{no summation over } i,j, k \textrm{ and } i\neq j\neq k )
\end{align}
for all non-zero components of $\r$.  For $d=2$, they always coincide. 
\end{theorem}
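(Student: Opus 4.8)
The plan is to route everything through the inequality $C_c(\rho)\ge C_{l_1}(\rho)$ recalled above and to determine exactly when it is saturated. For any decomposition $\rho=\sum_a p_a|\psi_a\rangle\langle\psi_a|$ with $|\psi_a\rangle=\sum_i\psi_{a,i}|i\rangle$ one has $\sum_a p_a C_c(|\psi_a\rangle)=2\sum_{j<k}\sum_a p_a|\psi_{a,j}||\psi_{a,k}|$, while $C_{l_1}(\rho)=2\sum_{j<k}|\sum_a p_a\psi_{a,j}\psi_{a,k}^*|$, so the difference is a sum of one triangle-inequality defect per pair $(j,k)$. Hence $C_c(\rho)=C_{l_1}(\rho)$ holds if and only if some decomposition turns every one of these into an equality, which forces on each pure state two conditions: (i) if $\rho_{jk}=0$ then no $|\psi_a\rangle$ may contain both $j$ and $k$ in its support, and (ii) if $\rho_{jk}\neq0$ then every nonvanishing product $\psi_{a,j}\psi_{a,k}^*$ must share the common phase $\arg\rho_{jk}=:\theta_{jk}$.

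For sufficiency I would read the hypothesis $\rho_{ij}\rho_{jk}\rho_{ki}>0$ as saying that the phase assignment $\theta_{jk}$ sums to zero around every triangle of the \emph{support graph} $G$ whose edges are the nonzero off-diagonal pairs. The aim is then to integrate $\theta$ to a potential, i.e. to find phases $\phi_i$ with $\theta_{jk}=\phi_j-\phi_k$ on every edge, and to apply the incoherent diagonal unitary $D=\mathrm{diag}(e^{-i\phi_1},\dots,e^{-i\phi_d})$, under which both $C_c$ and $C_{l_1}$ are invariant, reducing to $\sigma=D\rho D^\dagger$ with nonnegative entries. For such $\sigma$ one has $C_{l_1}(\sigma)=2\sum_{j<k}\sigma_{jk}$, and it remains to exhibit a decomposition of $\sigma$ into real nonnegative pure states, which saturates (i)–(ii) with $\theta\equiv0$ and hence gives $C_c(\sigma)=C_{l_1}(\sigma)$.

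For necessity I would begin from an optimal decomposition for $C_c(\rho)$; attaining the lower bound forces (i) and (ii). By (i) all pairs inside the support $S_a$ of a given $|\psi_a\rangle$ are nonvanishing, and (ii) then yields $\theta_{jk}=\arg\psi_{a,j}-\arg\psi_{a,k}$ for $j,k\in S_a$; telescoping this around any triple contained in a single $S_a$ gives $\theta_{ij}+\theta_{jk}+\theta_{ki}=0$ at once. The case $d=2$ is immediate: there are no triples, the stated condition is vacuous, and the single off-diagonal pair gives $C_c=C_{l_1}=2|\rho_{12}|$ by direct evaluation.

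The crux is the passage between these local, per-pure-state phase relations and the global triangle condition on $\rho$. On the sufficiency side two points need care: the vanishing of $\theta$ around triangles integrates to a global potential only when the cycle space of $G$ is generated by its $3$-cycles (immediate for complete or chordal supports), and producing a nonnegative decomposition of $\sigma$ is the \emph{completely positive} factorization problem, automatic in low dimension but requiring an argument in general. On the necessity side the matching difficulty is that a triple $\{i,j,k\}$ with all three off-diagonals nonzero need not lie in the support of any single $|\psi_a\rangle$, so the telescoping step does not reach it directly; closing this requires controlling how the supports of the optimal ensemble cover the triangles of $G$. This interplay between the zero pattern and the phase pattern of $\rho$ is where I expect the real work to lie.
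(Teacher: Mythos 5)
You follow the same route as the paper's own proof: reduce the question to saturating $C_c(\rho)\ge C_{l_1}(\rho)$ pair by pair, deduce that every pure state in a saturating ensemble must have clique support in the support graph and phase differences $\arg\psi_{a,j}-\arg\psi_{a,k}=\arg\rho_{jk}$ on that support, then (for sufficiency) integrate these phases to a potential $\theta_j-\theta_k$, rotate by an incoherent diagonal unitary, and decompose the resulting entrywise-nonnegative matrix into entrywise-nonnegative pure states. The two places where you say ``the real work lies'' are exactly the two steps the paper asserts with no argument: its sentence ``so it is always possible to decompose $\rho$ with $|\tilde{\psi}_a\rangle=\sum_i|\tilde{\psi}_a^i|e^{i\theta_i}|i\rangle$'' is precisely your completely-positive-factorization problem, and its necessity argument implicitly assumes that every triple with three nonzero off-diagonals lies inside the support of a single member of the optimal ensemble. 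Your reading of the structure is therefore accurate; the difficulty is that neither of the two gaps can be closed, because each one is the site of a counterexample to the theorem as stated.

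For necessity, take $d=3$, $\alpha\notin 2\pi\mathbb{Z}$, and
\begin{align*}
\rho&=\tfrac{1}{3}\sum_{a=1}^{3}|\psi_a\rangle\langle\psi_a|,\\
|\psi_1\rangle&=\tfrac{1}{\sqrt{2}}\bigl(|1\rangle+|2\rangle\bigr),\quad
|\psi_2\rangle=\tfrac{1}{\sqrt{2}}\bigl(|2\rangle+|3\rangle\bigr),\\
|\psi_3\rangle&=\tfrac{1}{\sqrt{2}}\bigl(|3\rangle+e^{i\alpha}|1\rangle\bigr).
\end{align*}
Each pair $(j,k)$ receives a contribution from exactly one $|\psi_a\rangle$, so every triangle inequality is trivially saturated: $C_c(\rho)\le\sum_a\tfrac13 C_c(|\psi_a\rangle)=1=C_{l_1}(\rho)$, hence $C_c(\rho)=C_{l_1}(\rho)$, while $\rho_{12}\rho_{23}\rho_{31}=e^{-i\alpha}/216$ has arbitrary phase. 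This realizes exactly your worry about a triangle not covered by any single support, and shows it cannot be argued away. For sufficiency, your conditions (i)--(ii) show that $C_c(\sigma)=C_{l_1}(\sigma)$ for an entrywise-nonnegative $\sigma$ forces every member of a saturating ensemble to have constant phase on its support, i.e.\ forces a factorization $\sigma=\sum_a p_a v_a v_a^{T}$ with entrywise-nonnegative vectors $v_a$: complete positivity of $\sigma$ in the matrix-theoretic sense. It is classical that for $n\ge 5$ the doubly nonnegative cone strictly contains the completely positive cone, so any trace-normalized doubly-nonnegative, non-completely-positive matrix is a state whose nonzero entries are all positive (the triangle phase condition holds) yet which has $C_c>C_{l_1}$. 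Your remark that this step is ``automatic in low dimension'' marks the correct boundary: doubly nonnegative equals completely positive only for $d\le 4$. In sum, the sufficiency direction can survive at most for $d\in\{3,4\}$, the necessity direction fails already at $d=3$, and what your proposal exposes is not missing work in the paper's proof but a statement that is false in general.
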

The proof is given in Appendix \ref{CcCl1}.
We can see that the equality 
$C_c(\r)=C_{l_1}(\r)$ always holds  for a real symmetric state $\r$.

\section{$C_c^{(k)}$ AND PATH DISTINGUISHABILITY}\label{path}
We expect that the generalized coherence concurrence is useful for some quantum systems about which we want to know both how many bases are coherent with each other and how much they are coherent to each other. As an example, here we try to delve into the  \emph{path distinguishability problem} in multi-slit experiments using $C_c^{(k)}$.    

Wave-particle duality is an ironic but intriguing property
of quantum theory. There have been efforts to understand the complementary principle in the context of two-path interference experiments \cite{wootters1979, green19, engl1996}, in which the duality is quantitatively expressed, e.g., as the Englert-Greenberger-Yasin (EGY) relation.
 The investigation has gone further to multi-path cases \cite{jaeger1995, durr2001, bimonte2003, englert2}. 
 
   Based on the
natural idea that coherence is a representative wave-like property, a new duality relation for
general $d$-slit interference was obtained using $C_{l_1}$ \cite{bera, paul2017}:
\begin{align}
\mathcal{D}_Q +\frac{1}{d-1}C_{l_1} \le 1,
\end{align}
where $\mathcal{D}_Q$ is a path distinguishability, (representative of particle-like  property) based on UQSD (the unambiguous quantum state discrimination). The inequality is saturated  when the quanton (wave-particle-like quantum system) is pure. 

 But as $C_{l_1}$ is non-zero if and only if the coherence number is not smaller than 2, what we can say with $C_{l_1}$ is just whether the quanton has wave-like property or not. We claim that the generalized coherence concurrence provides further information on how many slits are unambiguously identified.
 
  We first show that coherence 2-concurrence $C_c^{(2)}$ gives a sharper bound for $\mathcal{D}_Q$ and then discuss the relation between the amount of $C_c^{(k)}$ for each $k$ and the number of completely distinguishable slits.

\subsection*{Path distinguishability revisited with $C_c^{(2)}$}
We tackle the problem by first considering $d$-slit interference of pure quantons, the state of which is expressed with $d$ basis states $\{|\p_i\>\}_{i=1}^{d}$ as
 
\begin{align}
|\P\> = \sum_ic_i|\p_i\>,
\end{align}
where the bases are orthonormal since they represent  well seperated different slits and $\sum_i |c_i|^2=1$.
To measure the quanton, we need to let a detector interact and correlate as follows:
\begin{align}
|\P\> \mapsto  \sum_ic_i|\p_i\>\otimes |0\>_D \mapsto  \sum_ic_i|\p_i\>\otimes |i\>_D.
\end{align}
The states are normlized, $\<\p_i|\p_i\>=\<i|i\>_D=1$. 
But as $|i\>_D$'s are not necessarily orthogonal, we express them with an orthonormal basis set $\{|a\>\}$ and $\{|\phi_i\>\}$ as
\begin{align}
|i\>_D = \phi_i|\phi_i\> +\sum_a\sqrt{p_a}q_a^i|a\> \equiv \phi_i|\phi_i\> +|q^i\>,
\end{align}
where $\sum_a p_a=1$,  $\<\phi_i|\phi_j\> \neq 0$ only when $i =j$, and $ \<\phi_i|a\>=0$ for all $j$ and $a$.
The normalization condition gives $\forall i: $ $|\phi_i|^2 + \sum_a p_a |q_a^i|^2 =1$.
And the reduced density matrix of the quanton is given by
\begin{align}
\label{r^s}
\r^s &= tr_D\Big(|\P\>\<\P|\Big)\nn \\
   &=\sum_{i,j} (\sum_a p_a c_iq^i_a c^*_jq^{j*}_a +\delta_{ij}c_ic_j^*|\phi_i|^2 ) |\p_i\>\<\p_j| \nn \\
   &= \sum_i |c_i\phi_i|^2 |\p_i\>\<\p_i| + \sum_a \Bigg(p_a \sum_i|c_iq_a^i|^2 \Bigg) |\p^a\>\<\p^a|,
\end{align}
where $|\p_a\> \equiv \sum_ic_iq^i_a |\p_i\>/ \sqrt{\sum_i|c_iq_a^i|^2}$  is normalized. This is one way of pure state decompostion for $\r^s$, so we have 
\begin{align}
\label{pureineq}
C_c^{(k)}(\r^s) \le   \sum_a \Bigg(p_a\sum_i|c_iq_a^i|^2\Bigg) C_c^{(k)}(|\p_a\>).
\end{align}
by the definition of $C_c^{(k)}$. We can adjust $\{p_a, q_a^i \}$ so that the inequality is saturated.
 A special case is when 
\begin{align}
|i\>_D = |a\>,
\end{align}
,i.e., $|i\>_D$ are the same for all $i$. Then Eq. \eqref{r^s} becomes pure,
\begin{align}
\r^s =  \sum_{i,j} c_ic^*_j|\p_i\>\<\p_j|,
\end{align}
which means that the measurement plays no role for the quanton system.

 To obtain the path distinguishability, we divide measurement operations into two groups as
\begin{align}
\label{projection}
\hat{A}_m|i\>_D \propto & \quad|\phi_i\> ,\nn \\
\hat{B}_m|i\>_D \propto &  \quad |q^i\> 
\end{align}
so that $\hat{A}_m$ and $\hat{B}_m$ represent successful and failure transformations for distinguishability respectively with the restriction
 $\sum_m\Big( \hat{A}_m\hat{A}_m^\dagger +\hat{B}_m\hat{B}_m^\dagger\Big) = \mathbb{I}$.
The success and failure probability are defined as
\begin{align}
& P_d=\sum_{i}|c_i|^2 \sum_{m}\<i| \hat{A}_m^{\dagger}\hat{A}_m|i\>_D, \nn \\
& Q_d=\sum_{i}|c_i|^2 \sum_{m}\<i| \hat{B}_m^{\dagger}\hat{B}_m|i\>_D. \qquad (P+Q=1) 
\end{align}
Then using Cauchy-Schwarz inequality (see Eq. (9) of \cite{qiu2002}), we have 
\begin{align}
\label{q^2}
Q_d^2 \ge \frac{d}{d-1}\sum_{i\neq j} |c_i|^2|c_j|^2 & \<i| \sum_m \hat{B}_m^{\dagger}\hat{B}_m|i\>_D \nn \\
 & \times \<j| \sum_m \hat{B}_m^{\dagger}\hat{B}_m|j\>_D
\end{align}
Eq. \eqref{projection} gives 
\begin{align}
 \<i| \sum_m \hat{B}_m^{\dagger}\hat{B}_m|i\>_D & =\sum_{a,b}\sqrt{p_a p_b}q_a^{i*}q_b^i \<a| \sum_m \hat{B}_m^{\dagger}\hat{B}_m|b\> \nn \\
  &=\sum_{a,b}\sqrt{p_a p_b}q_a^{i*}q_b^i \<a| \Big( \mathbb{I} - \sum_m \hat{A}_m^{\dagger}\hat{A}_m\Big)|b\> \nn \\
   &= \sum_a p_a |q_a^i|^2,
\end{align}
and Eq. \eqref{q^2} is rewritten as
\begin{align}
\label{qfin}
Q_d &\ge  \Bigg[\frac{d}{d-1}\sum_{i\neq j} |c_i|^2|c_j|^2 \sum_{a,b}\Big(p_a|q_a^i|^2\Big) \Big(p_b|q_b^j|^2\Big) \Bigg]^\frac{1}{2} \nn \\
 & \ge \sum_a p_a \Bigg[\frac{d}{d-1}\sum_{i\neq j} |c_i|^2|c_j|^2 |q_a^i|^2|q_a^j|^2 \Bigg]^\frac{1}{2} \nn \\
 & \ge C_c^{(2)}(\r^s).
\end{align}
The second inequality comes from Eq. \eqref{subadd} and the third from Eq.  \eqref{pureineq}.
So the success probability is bounded by
\begin{align}
P_d \le 1- C_c^{(2)},
\end{align}
and the path distinguishability $\mathcal{D}_Q$, the upper bound of $P_d$, is given by
\begin{align}
\mathcal{D}_Q = 1-C_c^{(2)} \le 1-\frac{1}{d-1} C_{l_1} 
\end{align}
This is a tighter upper bound than that presented in \cite{bera}.

Now we move on to mixed quanton case, in which the quanton system has some degree of interation with the environment. The mixed state density matrix is expressed as 
\begin{align}
\r^{sd} = \sum_x \la_x \sum_{i,j} \chi_x^i \chi_x^{j*} |\p_i\>\<\p_j|\otimes |i\>\<j|_D.
\end{align}
After partial-tracing the detector, the reduced density matrix is given by 
\begin{align}
\r^s = & \sum_{a,x}p_a\la_x \sum_{i,j}\chi^i_x q_a^i \chi^{j*}_{x} q_a^{j*}|\p_i\>\<\p_j| \nn \\
& + \sum_i \Bigg( \sum_x |\chi_x^i|^2\Bigg) |\phi_i|^2 |\p_i\>\<\p_j|,
\end{align} 
This is a pure state decompostion of $\r^s$ and a similar ineqaulity to Eq.  \eqref{pureineq} holds.
Since $|i\>_D$ appears with probability $\sum_x\la_x|\chi^i_x|^2$, the failure probability for the mixed state is bounded below as
\begin{align}
Q_d  &\ge  \sum_a p_a \Bigg[ \frac{d}{d-1}\sum_{i\neq j} \sum_x\la_x|\chi^i_x|^2 \sum_y\la_y|\chi^i_y|^2  |q_a^i|^2 |q_a^j|^2 \Bigg]^\frac{1}{2} \nn \\
 & \ge \sum_{a,x} p_a\la_x \Bigg[ \frac{d}{d-1}\sum_{i\neq j} |\chi^i_x q_a^i|^2|\chi^j_x q_a^j|^2 \Bigg]^\frac{1}{2} \nn \\
 & \ge C_c^{2}(\r^s).
\end{align}
So $\mathcal{D}_Q$ for mixed states is also given by
\begin{align}
\mathcal{D}_Q = 1-C_c^{(2)}, 
\end{align}
which is more accurate result than the inquality $\mathcal{D}_Q \le 1-\frac{1}{d-1}C_{l_1}$ for mixed states given in \cite{bera}.

Summarizing, we obtained the same form of the path distinguishability for both pure and mixed quanton systems with $C_c^{(2)}$. On the other hand, $l_1$-norm presents less tight bound for pure systems and inequality for mixed states. 

\subsection*{$C_c^{(k)}$ and the number of distinguishable slits} 

Now we think of a quanton state $\r^{s}$ with $C_c^{(k+1)} = C_c^{(k+2)} =  \cdots = C_c^{(d)}=0$, or $r_C(\r^s)=k$ equivalently.
In this case we can express the detector states, without loss of generality, as
\begin{align}
& |1\>_D = \phi_1|\phi_1\> +|q^1\>, \quad |2\>_D = \phi_2|\phi_2\> +|q^2\>, \quad \cdots , \nn \\
& |k+1\>_D = |\phi_{(k+1)}\>, \quad \cdots , \quad |d\>_D = |\phi_d\>,
\end{align}
which we can see from Eq. \eqref{r^s}.
With measurement operators $A_m $ and $B_m$ in \eqref{projection}, we can not receive confusing information from $(k+1)$ to $d$-th slit. So we can state that \emph{if $C_c^{(k+1)} =0$ for the quanton state $\r^s$ then there exist $(d-k)$-slits that we can identify unambiguously.}

\section{Conclusions}\label{conclusion}
\indent
In this study, we introduced a family of new coherence monotones, generalized coherence concurrence, which has a close interrelationship with the coherence number and  the generalized entanglement concurrence.
  We then compared the coherence 2-concurrence with the coherence concurrence of \cite{Qi} and $l_1$-norm coherence, which supports the assumption that the operational role of $l_1$-norm in coherence quantitative theory is that of negativity in entanglement theory. An example for the applications of $C_c^{(k)}$ to quantum systems was path distinguishability problems. We obtained a sharper equation for the path distinguishability, and gave the relation between the coherence number of the system and the number of identifiable slits.  

One of the remaining problems is to measure $C_c^{(k)}$ as the path distinguishability in the experimental interference pattern, as $C_{l_1}$ in \cite{paul2017}. We also guess the other members of the coherence family have accurate relations with some quantities in multi-slit interference. Comparison of the coherence monotones obtained from the differential Chernoff bound \cite{calsam, bisw} might present a clue to this problem.  In a broader sense, the relation of four quantities--- the coherence number, the Schmidt number, the generalized entanglement and coherence concurrence--- would have an interesting structure to pursue further in both mathematical and practical directions.  We expect the generalized coherence concurrence will be useful for any quantum phenomena in which not only the amount but also the order of coherence is crucial. 

\begin{acknowledgments}
The author is grateful to Prof. Jung-Hoon Chun for his advice during the research. This was supported by Basic Science Research Program through the National Research Foundation of Korea funded by the Ministry of Education(NRF-2016R1D1A1B04933413).

\end{acknowledgments}

\appendix

\section{Proof of \eqref{C3p}}\label{C3pure} 

 For a pure state $|\p\>$, we have
\begin{align}
&\sum_n p_nC_c^{(k)}(|\p_n\>)\qquad \Big(|\p_n\> \equiv \frac{K_n|\p\>}{\< \p|K_n^\dagger K_n|\p \>^\frac{1}{2}} \Big) \nn \\
& = d \sum_n \Big(\frac{(d-k)!}{d!} \nn \\
&\qquad\quad \times \sum_{i_1 \neq \cdots \neq i_k}\Big| \Big(\sum_{j_1} K_n^{i_1j_1}\p_{j_1} \Big)^2\cdots \Big(\sum_{j_k}K_n^{i_kj_k}\p_{j_k}\Big)^2 \Big| \Big)^{\frac{1}{k}} 
\end{align}
Then 
\begin{align}
&\frac{1}{d}\Big(\frac{d!}{(d-k!)}\Big)^{\frac{1}{k}}\sum_n p_nC_c^{(k)}(|\p_n\>)  \nn \\
& = \sum_n \Big(\sum_{i_1 \neq \cdots \neq i_k}\Big| \sum_{j_1\neq \cdots \neq j_k} (K_n^{i_1j_1}\p_{j_1})^2 \cdots (K_n^{i_kj_k}\p_{j_k})^2 \Big| \Big)^{\frac{1}{k}}
\end{align}
from the fact that $K_n$ is expressed as $K_n=\sum_i c_n^i |s_i\>\<i|$ where $|s_i\> $ is a re-ordered vector of the referential
basis index \cite{winter}.  Using  the relation
\begin{align}
\label{subadd}
S_k(\la_1)^{\frac{1}{k}} +S_k(\la_2)^\frac{1}{k} \le S_k(\la_1+\la_2)^{\frac{1}{k}}
\end{align}
of elementary symmetric  polynomials $S_k$ \cite{gour}, we have the following inequality:
\begin{align}
& \sum_n \Big(\sum_{i_1 \neq \cdots \neq i_k}\Big| \sum_{j_1\neq \cdots j_k} (K_n^{i_1j_1}\p_{j_1})^2 \cdots (K_n^{i_kj_k}\p_{j_k})^2 \Big| \Big)^{\frac{1}{k}} \nn \\
& \le \sum_n \Big( \sum_{j_1 \neq \cdots  \neq  j_k }\Big|    \p_{j_1}^2 \cdots \p_{j_k}^2\Big| 
\sum_{i_1 \neq \cdots  \neq  i_k}\Big| K_n^{i_1j_1}\cdots K_n^{i_kj_k}\Big|^2 \Big)^{\frac{1}{k}} \nn \\
& \le \Big( \sum_{j_1 \neq \cdots  \neq  j_k}\Big| \p_{j_1}^2 \cdots \p_{j_k}^2\Big| \sum_{n_1, i_1}\Big| K_{n_1}^{i_1j_1}\Big|^2 \cdots \sum_{n_k,i_k} \Big| K_n^{i_kj_k} \Big|^2 \Big)^{\frac{1}{k}}.
\end{align}
Since 
\begin{align}
\sum_{n,i}|K_n^{ij}||K_n^{ij}| =\sum_n|(c_n^j)^2 \<s_j|(\sum_i|i\>\<i|)|s_j\>| =1,
\end{align}
we finally have
\begin{align}
\sum_n p_nC_c^{(k)}(|\p_n\>) \le C_c^{(k)}(|\p\>).
\end{align}

\section{Proof of Theorem \ref{CcCl}}\label{CcCl1}

\noindent 1. $d\ge 3 $ case:
\\
a. $\frac{\r_{ij}\r_{jk}\r_{ki}}{|\r_{ij}\r_{jk}\r_{ki} |} = 1 \Longrightarrow C_c(\r) = C_{l_1}(\r)$
	
	The hermiticity of a quantum state $\r$ is explicitly expressed as
	\begin{align}
	\label{as}
	\r_{ij}=|\r_{ij}|e^{i\th_{ij}}, \quad \textrm{with} \quad |\r_{ij}|=|\r_{ji}|, \quad \th_{ij} =-\th_{ji}.
	\end{align}
	The condition \eqref{restriction} restricts the phases of $\r_{ij}$'s  as 
	\begin{align}
	\label{rot}
	\th_{ij} +\th_{jk}+\th_{ki}=0.
	\end{align}
	Then the general solution of \eqref{as} and \eqref{rot} is given with new real variables $\th_i$ by
	\begin{align}
	\th_{ij} = \th_i -\th_j
	\end{align}
	(See, e.g., p40 of \cite{siegel}). So it is always possible to decompose $\r$ with
	\begin{align}
	|\tilde{\p}_a\>= \sum_{i}\tilde{\p}_a^i|i\> =|\tilde{\p}_a^i|e^{i\th_i}|i\> 
	\end{align}
	as
	\be
	\r_{jk} =\sum_{a}\tilde{p}_a\tilde{\p}^j_a(\tilde{\p}^k_a)^* = \Bigg(\sum_{a}\tilde{p}_a|\tilde{\p}^j_a\tilde{\p}^k_a|\Bigg) e^{i(\th_j-\th_k)}.
	\ee
	Then $C_{l_1}$ under this decomposition is
	\be
	C_{l_1}(\r)=2\sum_{j<k}\sum_{a}\tilde{p}_a|\tilde{\p}^j_a\tilde{\p}^k_a|,
	\ee
	and we have
	\be
	C_{c}(\r) \le \sum_{a}\tilde{p}_a C_c(|\tilde{\p}_a\>) =C_{l_1}(\r).
	\ee
	Since the inequality $ C_c(\r) \ge C_{l_1}(\r)$ always holds, we have  $ C_c(\r) = C_{l_1}(\r)$.
	\\ 
	\\
b. $ C_c(\r) = C_{l_1}(\r) \Longrightarrow \frac{\r_{ij}\r_{jk}\r_{ki}}{|\r_{ij}\r_{jk}\r_{ki} |} = 1$
	
	Suppose that the decomposition of $\r $ as $\r =\sum_{\a}p_a|\p_\a\>\<\p_\a|$ gives the minimal value for $C_c(\r)$. Then we have
	\be
	C_c(\r) = 2\sum_{j<k} \sum_\a p_\a |\p^\a_j\p^\a_k|.
	\ee
	On the other hand, $C_{l_1}(\r)$ expressed with this decomposition is
	\be
	C_{l_1}(\r)=2\sum_{j<k} \Big|\sum_a p_\a\p^\a_j(\p^{\a}_k)^*\Big|.
	\ee 
	Since the inequality
	\be
	\label{ine}
	\sum_a p_a|\p_j^a\p_k^a| \ge  \Big|\sum_a p_a\p^a_j(\p^{a}_k)^*\Big| 
	\ee
	holds for all $(j,k)$, the condition for $C_{l_1}(\r)$ and $C_c(\r)$ to be the same value is that \eqref{ine} is saturated for all  $(j,k)$. This is equivalent to 
	\begin{align}
	\th_j^a -\th_k^a =\th_j-\th_k = \th_{jk},\quad \forall a.
	\end{align} This completes the proof for $d\ge 3$.
	\\
	\\
2.	$d=2$ case:
	
	Now $\r_{12}$ can always be written as $|\r_{12}|e^{i(\th_1-\th_2)}$, so following the argument in (a) of $d\ge 3$ case, the coherence concurrence is equal to $l_1$-norm coherence in $d=2$. 

\bibliography{gconpath}

\end{document}